\documentclass[letterpaper, 10 pt, conference]{ieeeconf}

\IEEEoverridecommandlockouts
\usepackage{amsmath,amssymb,amsthm}
\usepackage{mathtools}

\usepackage{microtype}

\usepackage{graphicx}
\usepackage{booktabs}
\usepackage{tikz}
\usepackage{pgfplots}
\pgfplotsset{compat=1.18}
\usetikzlibrary{arrows.meta,calc}

\usepackage[
backend=biber,
style=ieee,
natbib
]{biblatex}
\AtEveryBibitem{%
    \clearfield{doi}
    \clearfield{issn}
}
\DeclareSourcemap{
  \maps[datatype=bibtex]{
    \map{
      \step[fieldset=urldate, null]
    }
  }
}

\usepackage{hyperref}
\usepackage{xcolor}
\hypersetup{
    colorlinks=true,
    linkcolor=blue!70!black,
    citecolor=red!70!black,
    urlcolor=blue!70!black
}

\newtheorem{theorem}{Theorem}[section]
\newtheorem{lemma}[theorem]{Lemma}

\newtheorem{corollary}[theorem]{Corollary}

\theoremstyle{definition}
\newtheorem{definition}[theorem]{Definition}

\theoremstyle{remark}
\newtheorem{remark}[theorem]{Remark}

\newcommand{\E}{\mathbb{E}}

\newcommand{\poa}{\mathrm{PoA}}

\DeclareMathOperator*{\argmax}{arg\,max}

\title{\LARGE \bf Invariant Price of Anarchy and Multiplicative Smoothness}

\author{Ilia Shilov, Heinrich H.\ Nax, Saverio Bolognani%
\thanks{I. Shilov and S. Bolognani are with the Automatic Control Laboratory, ETH Zurich, 8092 Zurich, Switzerland. Email: \{ishilov, bsaverio\}@ethz.ch}%
\thanks{H. H. Nax is with the Zurich Center for Market Design, University of Zurich, 8050 Zurich, Switzerland. Email: heinrich.nax@uzh.ch}%
\thanks{This work was supported by the Swiss National Science Foundation under the NCCR Automation (grant agreement 51NF40\textunderscore225155).}%
}

\begin{document}
\maketitle
\thispagestyle{empty}
\pagestyle{empty}

\begin{abstract}
The Price of Anarchy~(PoA) is a popular measure of the costs of decentralization in terms of efficiency losses. Almost all PoA analyses operate within a framework assuming both Cardinal Full-Comparability~(CFC) and smoothness, in which case any derived bounds conveniently extend beyond pure Nash to coarse correlated equilibria and no-regret learning outcomes. However, interpersonal utility comparability is an additional assumption that generally has to be justified. Without it, cardinal utilities (e.g. defined under classical von~Neumann--Morgenstern framework) are unique only up to agent-specific affine transformations, rendering both the utilitarian PoA and the classical smoothness conditions representation-dependent. In this paper, we operate under a more general Cardinal Non-Comparability~(CNC) framework, under which the weighted Nash welfare is a canonical admissible aggregator. We introduce \emph{multiplicative smoothness}, a product-form condition matched to the multiplicative structure of Nash welfare, and obtain PoA bounds that are CNC-invariant and extend to coarse correlated equilibria. We demonstrate applicability of our framework on single-choice welfare games, deriving the bounds through simple proof relying on multiplicative retention envelope and geometric closure. The interpretation of this bound in terms of the true cost of decentralization depends crucially on interpersonal comparability of utilities.
\end{abstract}

\section{Introduction}\label{sec:Introduction}

The main point of this paper is that there can be no full understanding concerning the true cost of decentralization without interpersonal comparability of utilities. Whatever bounds can be obtained need to be interpreted given the specific structure of the application at hand and in light of whatever is known about utilities. But let us unpack the statements we just made beginning with the cost of decentralization, then relevant structural frameworks, and finally interpersonal utility comparability.

As game theory has become the standard language to model many applications to socio-technical systems, including transportation networks, energy grids, and distributed routing protocols, etc., the realization has sunk in that strategic behaviors of self-interested agents may lead to system-level inefficiencies. To measure that efficiency loss the Price of Anarchy~(PoA)~\citep{koutsoupias1999worst,roughgarden2002poa,roughgarden2004bounding} has become the standard metric: given a game together with an equilibrium concept and a social welfare function~(SWF), the PoA measures the worst-case ratio between the social welfare at equilibrium and at the social optimum. A lot of research has gone into analysis of tight and reliable PoA bounds, which has been extremely influential in shaping our understanding of the costs of distributed control. PoA is used both as an evaluation metric and as a design target in the sense of an \emph{operational objective}, where agents' utility functions are designed so as to minimize PoA and thus realigning self-interested behaviors with system-level goals~\citep{paccagnan2020utility,chandan2019optimal,Paccagnan2022Utility,strong_poa_ferguson}.

A large literature has developed analytical techniques for bounding and optimizing the PoA for various classes of games \citep{roughgarden2002poa,paccagnan2020utility, poa_markov, marden2013distributed}, the most widely adopted framework being the smoothness framework~\citep{roughgarden_robustness} as recently generalized in \cite{chandan2024methodologies}.
This framework turns the derivation of PoA bounds into the verification of the \emph{(generalized) smoothness inequality} with robustness as its main appeal: any bound proved through a smoothness argument extends from pure to mixed Nash, to correlated and to coarse correlated equilibria~\citep{roughgarden_robustness}. This natural extension is relevant because these broader equilibrium notions are guaranteed to exist under mild conditions and, in the case of coarse correlated equilibria, arise naturally as the long-run outcomes of no-regret learning dynamics~\citep{blum2007learning}.

An important aspect of PoA analysis when used for bounding randomised equilibrium notions is that payoff comparisons are essentially made in expectation under lotteries over action profiles, which naturally evokes the standard von~Neumann--Morgenstern~(vNM) model of preferences over lotteries \citep{vonneumann1944theory}. Generally, under this framework, each agent's utility representation is unique only up to \emph{individual positive affine transformations} $u_i \mapsto \alpha_i u_i + \beta_i, \alpha_i > 0.$
This fact is critical because, while the equilibrium sets of pure, mixed, correlated, and coarse correlated equilibria are \emph{invariant} under such transformations~\citep{tewolde2024game}, the standard utilitarian PoA is not; utilitarian being when welfare is measured by the sum of utilities. Hence, a player-specific affine transformation preserves the equilibrium behaviour, but may change both the social optimum and the utilitarian welfare attained at equilibrium. As a result, the inefficiency expressed by PoA may therefore actually depend not only on the strategic incentives, but also on perhaps sometimes arbitrary choices of units and baselines. In this sense, the usual utilitarian PoA can be seen as potentially misaligned with the representational freedom presupposed by the equilibrium analysis.
What is more, the same problem may appear on the bounding methodology side as the (generalized) smoothness inequalities are generally not invariant under individual affine transformations either. Thus, both the welfare benchmark and the main proof technique of modern PoA analysis inherit an implicit representation dependence that the equilibrium concepts themselves do not.

Since equilibrium behavior is invariant under individual affine transformations, the source of representation-dependence must lie in the welfare criterion used to evaluate it. This raises a natural question: \emph{which social welfare function is admissible when individual utility representations are non-unique up to player-specific affine transformations?} To answer it, we adopt the welfarist approach from social choice theory~\citep{roberts_interpersonal_1980,Sen1979_welfarism,shilov2025welfare}, which classifies admissible welfare criteria by the degree of interpersonal comparability that they presuppose as is expressed through the set of admissible affine, common or individual, transformations. A general and relevant comparability assumption is \emph{Cardinal Non-Comparability}~(CNC), under which neither the scale nor the baseline of one agent's utility is comparable with another's. This matches the informational content of vNM utility theory when individual positive affine transformations are taken as behaviorally irrelevant, for example when intensities of self-reported preferences are assumed to be idiosyncratic. Under CNC, the canonical admissible welfare aggregator is the Nash social welfare function~\citep{roberts_interpersonal_1980}. The resulting PoA is invariant to player-specific affine transformations and therefore depends only on the underlying strategic structure of the game.

Representation-dependence has immediate practical consequences. Whenever PoA is used as an objective in mechanism design, utility design, or tolling~\citep{paccagnan2020utility,chandan2019optimal,Paccagnan2022Utility}, if the metric being optimised depends on the chosen payoff representation, then the ``optimal'' design inherits the same fragility. Recent work has already highlighted useful invariance properties of PoA-related methodologies. In load balancing, \citet{bilo2022nash} adopt Nash social welfare for its \emph{player-specific scale-independence} and derive tight PoA bounds for both weighted and unweighted cases. In a different direction, \citet{chandan2024methodologies} show that PoA and generalized PoA (defined through generalized smoothness) are invariant under a common positive rescaling and player-specific offsets, which in welfarist terms corresponds to a Cardinal Unit Comparability~(CUC) assumption. In this work, we consider invariance under fully player-specific positive affine transformations, which is the natural representation class once expected utility over lotteries is admitted. This leads us to revisit the social welfare criterion itself, and then to develop the matching smoothness based bounding methodology.

Building on the welfarist framework developed in~\citep{shilov2025welfare,shilov2025ipoa}, this paper addresses these issues with the following contributions, which are intended to contribute to a better understanding of PoA without assuming interpersonal (full-)comparability of utilities.

\emph{Contributions.} (i)~We formalise a CNC-invariant notion of PoA based on Nash welfare and baseline surpluses, resolving unit and offset ambiguity in the evaluation of equilibrium inefficiency (Section~\ref{sec:framework}). (ii)~We introduce \emph{multiplicative smoothness}, a product-form condition suited to the multiplicative structure of Nash welfare, and derive the corresponding PoA bounds for both pure Nash equilibria with general parameters $(\Lambda, \lambda)$ and coarse correlated equilibria with $(\Lambda, 1)$. As in the classical smoothness framework, this represents an extension to broad randomized equilibrium classes and characterization of the long-run efficiency of decentralised learning outcomes (Section~\ref{sec:smoothness}). (iii)~Using retention envelopes and concave closure, we derive explicit PoA bounds for weighted single-choice welfare games \cite{marden2013distributed} with power law utilities ($2^p$ for degree~$p$, recovering~\citet{bilo2022nash} for cost-minimization games and extending their result to CCE) via a simple proof (Section~\ref{sec:envelopes}).

\section{Framework}\label{sec:framework}
\subsection{Game model and equilibrium}

We consider a system with heterogeneous agents, modeled as a strategic game with players \(N = \{1, \dots, n\}\), strategy sets \(S_i\), and joint strategy space \(S = \prod_i S_i\). Each player \(i \in N\) has a utility function \(u_i : S \to \mathbb{R}\) that she seeks to maximise. We fix a \emph{baseline profile}~\(s^0 \in S\) representing the outcome in which no resource is attained (e.g., the zero-resource fallback in distributed welfare games~\cite{marden2013distributed}). Each agent's utility is evaluated as a \emph{surplus} over this baseline:
\begin{equation}
\pi_i(s) := u_i(s) - u_i(s^0) > 0
\qquad
\text{for all } i \in N,\ s \neq s^0 .
\label{eq:surplus}
\end{equation}
Positivity is natural in resource allocation: every feasible allocation is strictly preferred to getting nothing. The shift does not affect best responses and hence preserves all equilibrium sets.

The focus of this work is on characterising the degradation in system-wide performance resulting from self-interested decision making. For that purpose, we consider the following standard equilibrium notions.
\begin{definition}[Pure Nash equilibrium]
A profile \(s^{\mathrm{ne}} \in S\) is a pure Nash equilibrium (PNE) if
\begin{equation}\label{eq:pne}
\pi_i(s^{\mathrm{ne}}) \ge \pi_i(s_i', s^{\mathrm{ne}}_{-i})
\qquad
\text{for all } i \in N,\ s_i' \in S_i .
\end{equation}
\end{definition}
We write \(\mathrm{PNE}(G)\) for the set of all PNE of a game~\(G\).

The notion of pure Nash equilibrium involves only deterministic strategy profiles. Broader equilibrium notions like mixed Nash equilibria, correlated equilibria, and coarse correlated equilibria, however, require players to compare lotteries over outcomes. This induces lotteries over pure strategy profiles, that is, probability distributions \(\sigma \in \Delta(S)\).

We evaluate such lotteries in the standard von Neumann and Morgenstern framework \cite{vonneumann1944theory}. If a player's preference over lotteries satisfies the vNM axioms\footnote{Let $\succsim_i$ denote player~$i$'s preference over lotteries in $\Delta(S)$. The vNM axioms are: \emph{Completeness}: for all $\sigma,\tau \in \Delta(S)$, either $\sigma \succsim_i \tau$ or $\tau \succsim_i \sigma$. \emph{Transitivity}: $\sigma \succsim_i \tau$ and $\tau \succsim_i \rho$ imply $\sigma \succsim_i \rho$. \emph{Continuity}: for all $\sigma \succ_i \tau \succ_i \rho$, there exist $\alpha,\beta \in (0,1)$ such that $\alpha\sigma + (1{-}\alpha)\rho \succ_i \tau \succ_i \beta\sigma + (1{-}\beta)\rho$. \emph{Independence}: $\sigma \succsim_i \tau$ if and only if $\alpha\sigma + (1{-}\alpha)\rho \succsim_i \alpha\tau + (1{-}\alpha)\rho$ for all $\alpha \in (0,1]$ and $\rho \in \Delta(S)$.}, then the same function \(u_i\) introduced above ranks lotteries by expected utility,
\begin{equation}
U_i(\sigma) := \mathbb{E}_{s \sim \sigma}[u_i(s)] .
\label{eq:vnm}
\end{equation}
Moreover, this representation is unique only up to a player specific positive affine transformation $u_i \mapsto \alpha_i u_i + \beta_i, \alpha_i > 0$. Thus, once randomized outcomes are admitted, the behaviorally meaningful content of player \(i\)'s payoff is its affine equivalence class rather than its absolute level.

\begin{figure}[tb]
\centering
\begin{minipage}{0.45\columnwidth}
\centering
\textbf{Game \(G\)}
\[
\begin{array}{c|cc}
 & S_1 & S_2\\ \hline
S_1 & (3,3) & (1,4)\\
S_2 & (4,1) & \mathbf{(2,2)}
\end{array}
\]
\end{minipage}\hfill
\begin{minipage}{0.45\columnwidth}
\centering
\textbf{Game \(G'\):} \(u_1' = 3\, u_1\)
\[
\begin{array}{c|cc}
 & S_1 & S_2\\ \hline
S_1 & (9,3) & (3,4)\\
S_2 & (12,1) & \mathbf{(6,2)}
\end{array}
\]
\end{minipage}
\caption{Scaling player~$1$'s payoff by~$\alpha_1{=}3$ does not change the NE~$(S_2,S_2)$, yet the utilitarian PoA shifts from~$6/4{=}3/2$ to~$13/8$. The Nash welfare ratio~$9/4$ is invariant.}
\label{fig:two_matrix_games}
\end{figure}

Passing to surplus utilities removes the additive terms \(\beta_i\). Hence, after subtracting the refer \eqref{eq:surplus}, the admissible vNM transformations act on surpluses as player specific positive scalings \(\pi_i \mapsto \alpha_i \pi_i\). Correspondingly, the expected surplus \(\mathbb{E}[\pi_i(s)] = U_i(\sigma) - u_i(s^0)\) inherits the same scaling. This representation class will be the relevant one for the welfare analysis in the next subsections.

With the preferences over lotteries fixed, the concept of equilibrium extends naturally to randomized play. A distribution \(\sigma \in \Delta(S)\) over strategy profiles captures settings where agents randomise independently, or where an external mediator induces correlation.

\begin{definition}[Coarse correlated equilibrium]
A distribution \(\sigma \in \Delta(S)\) is a coarse correlated equilibrium, CCE, if no player can improve their expected payoff by committing in advance to a fixed action:
\begin{equation}
\mathbb{E}_{s \sim \sigma}[\pi_i(s)]
\ge
\mathbb{E}_{s_{-i} \sim \sigma_{-i}}[\pi_i(s_i', s_{-i})],
\label{eq:cce}
\end{equation}
$\text{for all } i \in N,\ s_i' \in S_i$, where \(\sigma_{-i}\) denotes the marginal of \(\sigma\) on \(S_{-i}\).
\end{definition}

CCE is the most permissive equilibrium concept in the hierarchy
\[
\mathrm{PNE} \subset \mathrm{MNE} \subset \mathrm{CE} \subset \mathrm{CCE},
\]
where a mixed Nash equilibrium, MNE, is a CCE whose distribution is a product \(\sigma = \sigma_1 \otimes \cdots \otimes \sigma_n\), a correlated equilibrium, CE, additionally requires that no player can improve by deviating after observing their recommendation, and every PNE embeds as a degenerate distribution.
Any bound on the worst-case welfare over CCE therefore applies a~fortiori to all smaller equilibrium classes.

\subsection{Social welfare and the welfarist approach}\label{sec:welfarism}

System performance is measured by a SWF~$W : S \to \mathbb{R}$, which assigns a social welfare value to each~$s \in S$.
A system-optimal strategy profile satisfies $s_{\mathrm{opt}} \in \arg\max_{s \in S} W(s)$.
Thus, a utility-maximization game describing the system is represented by the tuple
\begin{equation}\label{eq:game}
    G:= \{N, S, \{\pi_i\}_{i \in N}, W\}
\end{equation}
Throughout, we adopt the \emph{welfarist approach}~\citep{roberts_interpersonal_1980}, as adapted to control applications in \cite{shilov2025welfare}: under mild conditions, the social objective is a function of the individual utilities.
This approach is widely employed in the literature, most commonly through the \emph{utilitarian} specification, i.e. the social cost is defined by summing individual costs $W^{\mathrm{Util}}(s) = \sum_{i=1}^n \pi_i(s)$~\citep{roughgarden2002poa, roughgarden2004bounding, chandan2024methodologies}.

\begin{figure}[h]
    \centering
    \includegraphics[width=\columnwidth]{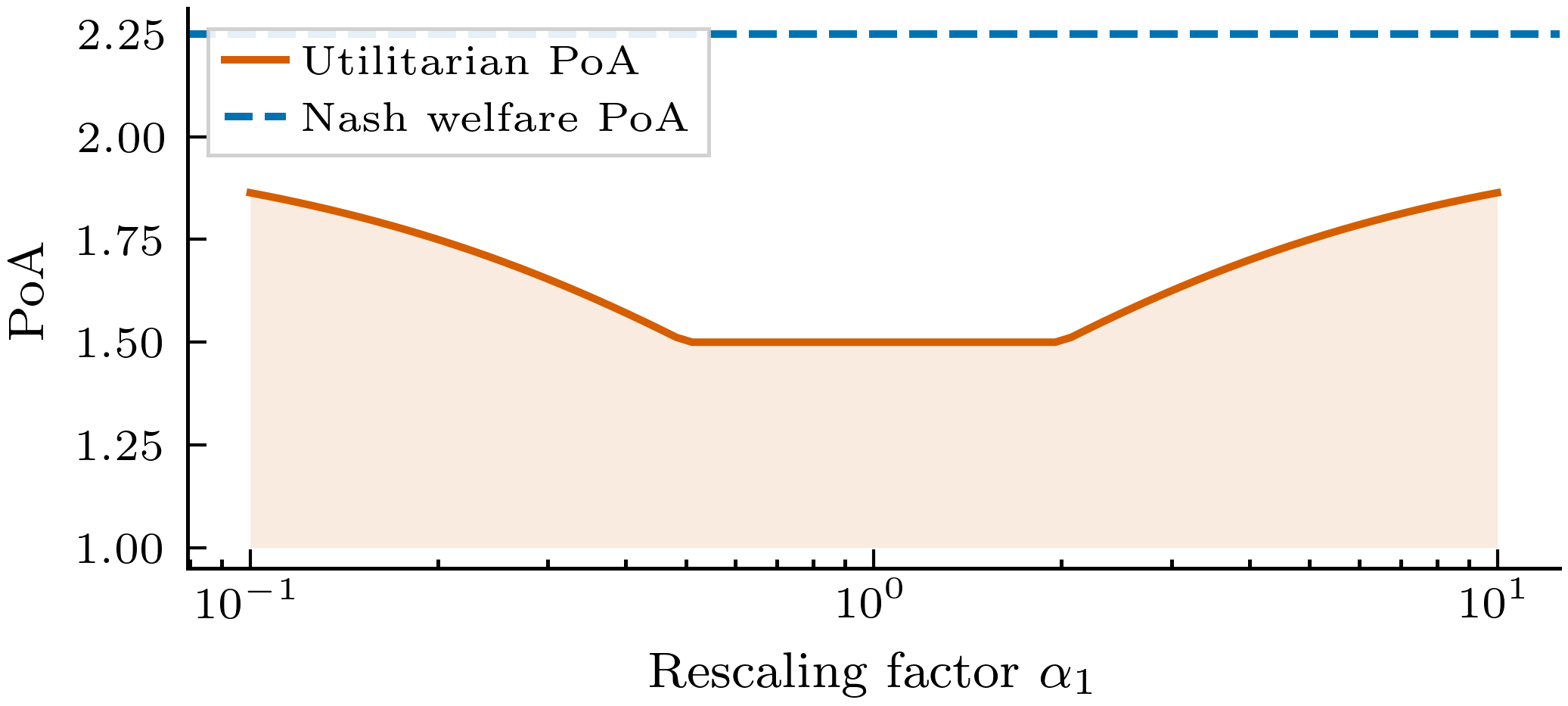}
    \caption{For the game in Figure~\ref{fig:two_matrix_games}, the utilitarian PoA changes as we rescale player~$1$'s utility by~$\alpha_1$, while the Nash welfare PoA remains exactly constant.}
    \label{fig:invariance_motivation}
\end{figure}

However, as discussed in the Introduction, the utilitarian sum is not invariant to representational choices inherent in players' payoff functions. The question is therefore: \emph{which form of $W$ is admissible when payoffs lack full comparability?}

The welfarist approach guides the choice of the aggregation function $W$ consistent with the invariance properties.
Once the individual payoffs are fixed, all information relevant for a social decision is contained in the payoff vector at each outcome.

Given a payoff profile~$\pi(s) = (\pi_1(s),\dots,\pi_n(s))$, a social preference~$\succeq$ ranks strategy profiles~$x, y \in S$.
Classic results show that, under mild conditions, such a ranking is represented by a continuous function $W:\mathbb R^n\to\mathbb R$, and that the admissible form of $W$ is determined by the degree of interpersonal comparability \citep{roberts_interpersonal_1980, dAspremontGevers2002, Sen1979_welfarism, shilov2025welfare}.
For the social preference to be consistent, we first specify which \emph{fundamental properties (or axioms)} it needs to satisfy: Weak Pareto (WP), Partial Independence of Irrelevant Alternatives (PI), and Continuity (C), stated here informally. We refer to~\cite{shilov2025welfare} for a formal treatment and discussion.

\begin{description}
    \item[WP] For any two profiles $x, y \in S$, if $\pi_i(x) > \pi_i(y)$ for all $i\in N$, then $x\succ y$.
    \item[PI] \cite[Sec. 5]{roberts_interpersonal_1980}. The ranking of two profiles does not depend on payoffs at other profiles, except on at most one reference profile.
    \item[C] Small changes in $\pi(x)$ and $\pi(y)$ do not reverse a strict ranking.
\end{description}
An optional property is Anonymity (A), which requires equal treatment of agents who differ only by labels. It encodes fairness, but if predetermined priorities are to be incorporated, anonymity may not be appropriate.
\begin{description}
    \item[A] For any permutation $\tau: N \to N$ of agents, the ranking given by $W$ remains the same.
\end{description}

Under WP, PI, and C there exists a continuous $W:\mathbb R^n\to\mathbb R$ such that
\[
x\succeq y\ \Longleftrightarrow\ W(\pi(x))\,\ge\,W(\pi(y)).
\]
The exact form of $W$ is determined by the chosen level of comparability. In what follows we focus on the cardinal classes that matter for numerical efficiency evaluation, following \cite{roberts_interpersonal_1980}.

\subsection{Comparability classes by invariance} \label{section: Comparability classes by invariance}
The admissible form of $W$ is constrained by the information the designer accepts about interpersonal comparison.
We encode that information through an invariance requirement: the social ranking must be preserved when each player's payoff undergoes an admissible transformation.

\begin{definition}\label{def:invariance}
Let $\Phi$ be a family of strictly increasing real maps.
A list $\varphi=(\varphi_1,\dots,\varphi_n)$ with $\varphi_i\in\Phi$ is an \emph{admissible transformation} if for all $x,y\in S$,
    \begin{equation*}
        W(\pi(x))\ \ge\ W(\pi(y))
         \quad\Leftrightarrow\quad
        W(\pi^{\varphi}(x))\ \ge\ W(\pi^{\varphi}(y)),
    \end{equation*}
    where $\pi_i^{\varphi}(s) = \varphi_i(\pi_i(s))$ is the transformed payoff.
\end{definition}

The choice of $\Phi$ specifies what is admissible for interpersonal comparison and leads to different canonical forms for $W$ \citep{shilov2025welfare}.
We focus on two main cases.

\subsubsection{Cardinal Unit Comparability (CUC) and Cardinal Full Comparability (CFC)}
Here $\Phi_{\mathrm{CUC}} = \big\{\,\varphi_i(t)=a\,t+b_i\ \text{with}\ a>0,\ b_i\in\mathbb R\,\big\}$: the scaling factor~$a$ is \emph{common} across agents, while baselines~$b_i$ may differ (see e.g. Observation 1 in \cite{chandan2024methodologies}). This amounts to assuming that payoff units are commensurable (e.g., all measured in the same currency), though zero-levels need not be aligned. Passing to surpluses removes the offsets $b_i$, so on surplus vectors CUC induces exactly a common positive scaling, $\pi_i \mapsto a\,\pi_i$ for all $i$, that is, the CFC action class. Under CUC (or CFC on surpluses), the admissible aggregators form a one-parameter family parametrised by an inequality aversion parameter~$\rho \ge 0$; the utilitarian sum $W^{\mathrm{Util}}(s) = \sum_i w_i \pi_i(s)$ ($\rho = 0$) is the most prominent member.

However, CUC and CFC are too restrictive in general. They assume a common interpersonal unit of measurement, and standard game theoretic payoff models do not supply such comparability by default. Rather, absent additional structure, each agent's payoff is only defined up to an agent specific positive affine transformation, e.g. as with vNM expected utility. Figure~\ref{fig:two_matrix_games} illustrates the issue: an individual affine change applied to one player leaves the strategic structure unchanged, yet can change the utilitarian PoA. Any aggregator that is admissible only under CUC (CFC) can therefore produce welfare rankings that vary under agent specific rescalings, even when the underlying game is informationally unchanged.

\subsubsection{Cardinal Non Comparability (CNC)}
Here $\Phi_{\mathrm{CNC}}$ contains all agent-specific positive affine maps $\varphi_i(t)=a_i t + b_i$, $a_i>0$. Units and zeros may differ across agents, corresponding to the most general setting, and the one aligned with vNM utility theory. Since surpluses absorb offsets (cf.~\eqref{eq:surplus}), the CNC invariance class acts on surpluses as individual positive scalings $\pi_i \mapsto a_i \pi_i$. Under CNC with $\pi_i(s) > 0$ for all $i$ and $s$, classical social-choice representation results imply that, under the welfarist axioms adopted here, the induced \emph{social welfare ordering} is of weighted Nash type on baseline surpluses~\citep{roberts_interpersonal_1980, shilov2025welfare}:
\begin{equation}\label{eq:nsw}
    W^{\mathrm{Nash}}(\pi(s))\;=\;\prod_{i \in N} \pi_i(s)^{p_i},
\end{equation}
with $p_i > 0$, $\sum_i p_i = 1$, and $p_i = p_j$ for all $i,j$ if
\textbf{A} is imposed. We therefore adopt the weighted Nash product as a
canonical cardinal representative of this ordering. Parameter $p_i$ may reflect the relative importance of agents, e.g. job weights in load balancing, bargaining power in bargaining games etc.

\subsection{Price of Anarchy}\label{sec:price_of_anarchy}

Assuming $\mathrm{NE}(G)$ is nonempty and taking $W = W^{\mathrm{Nash}}$ as in~\eqref{eq:nsw}, we define the \emph{CNC-invariant Price of Anarchy} of the game~$G$ as
\begin{equation}\label{eq:poa}
\poa(G) \;:=\; \frac{\max_{s \in S}\; W^{\mathrm{Nash}}(\pi(s))}{\min_{s \in \mathrm{NE}(G)}\; W^{\mathrm{Nash}}(\pi(s))} \;\ge\; 1.
\end{equation}
The CNC-invariant PoA represents the ratio between the Nash welfare at the social optimum and at the worst-performing Nash equilibrium. Under a CNC transformation $u_i \mapsto \alpha_i u_i + \beta_i$, surpluses transform as $\pi_i \mapsto \alpha_i \pi_i$, and the scaling factors cancel in the welfare ratio~\eqref{eq:poa}, so $\poa(G)$ depends only on the underlying preferences and not on the choice of units or baselines.

For a given class of games~$\mathcal{G}$, which may contain infinitely many instances, the PoA is
\begin{equation}\label{eq:poa_class}
\poa(\mathcal{G}) \;:=\; \sup_{G \in \mathcal{G}}\; \poa(G) \;\ge\; 1.
\end{equation}
A lower $\poa(\mathcal{G})$ corresponds to better worst-case equilibrium performance; $\poa(\mathcal{G}) = 1$ implies that all Nash equilibria of all games in~$\mathcal{G}$ are socially optimal.

\begin{remark}[PoA over broader equilibrium classes]\label{rem:poa_cce}
The definition~\eqref{eq:poa} extends naturally by replacing $\mathrm{NE}(G)$ with the set of MNE, CE, or CCE. Since $\mathrm{NE} \subset \mathrm{MNE} \subset \mathrm{CE} \subset \mathrm{CCE}$, the PoA over CCE is the weakest (largest) bound and implies all others. In the smoothness framework of Section~\ref{sec:smoothness}, the pure Nash bound is obtained under general $(\Lambda,\lambda)$ multiplicative smoothness. For randomized equilibrium classes, we establish the extension for the case $(\Lambda,1)$, which is also the case used in Section~\ref{sec:envelopes}. No-regret learning dynamics converge to the set of CCE~\citep{roughgarden_robustness}, so smoothness-based PoA bounds also characterise the long-run efficiency of decentralised learning.
\end{remark}

\section{Multiplicative Smoothness}\label{sec:smoothness}

Having defined the CNC-invariant PoA in Section~\ref{sec:price_of_anarchy}, we now develop a method for bounding it.
In the classical utilitarian setting, the $(\lambda,\mu)$-smoothness framework~\citep{roughgarden_robustness, chandan2024methodologies} provides worst-case PoA bounds by establishing an \emph{additive} inequality that holds for all pairs of strategy profiles.

\begin{definition}[$(\lambda,\mu)$-smoothness~\citep{roughgarden_robustness, chandan2024methodologies}]\label{def:additive_smooth}
A utility-maximisation game with utilitarian welfare $W^{\mathrm{Util}}(s) = \sum_{i} \pi_i(s)$ is $(\lambda,\mu)$-smooth if $\lambda > 0$, $\mu \ge 0$, and for all $s, s' \in \mathcal{S}$,
\begin{equation}\label{eq:additive_smooth}
\sum_{i=1}^n \pi_i(s_i', s_{-i}) \;\ge\; \lambda\, W^{\mathrm{Util}}(s') \;-\; \mu\, W^{\mathrm{Util}}(s).
\end{equation}
This implies $\poa(G) \le (1 + \mu) / \lambda$ \cite{roughgarden_robustness}.
\end{definition}

The additive structure of~\eqref{eq:additive_smooth} is matched to the additive structure of the utilitarian sum.
However, this structure is not well-defined under CNC. Under an admissible transformation $\pi_i \mapsto \alpha_i \pi_i$ with agent-specific $\alpha_i > 0$, condition~\eqref{eq:additive_smooth} becomes $\sum_i \alpha_i \pi_i(s_i', s_{-i}) \ge \lambda_\alpha\, W^{\mathrm{Util}}_\alpha(s') - \mu_\alpha\, W^{\mathrm{Util}}_\alpha(s)$, where $W^{\mathrm{Util}}_\alpha(s) = \sum_i \alpha_i \pi_i(s)$. Unless all $\alpha_i$ are equal (i.e., CUC rather than CNC), the tightest parameters $(\lambda_\alpha, \mu_\alpha)$ depend on the choice of~$\alpha$, so the resulting PoA bound $(1+\mu_\alpha)/\lambda_\alpha$ is representation-dependent. That motivates us to consider a multiplicative structure instead.

\subsection{Multiplicative structure}\label{sec:positive}

The Nash welfare~\eqref{eq:nsw} is defined as a product of powers $W^{\mathrm{Nash}}(\pi(s)) = \prod_i \pi_i(s)^{p_i}$, and is well-defined for strictly positive arguments. Since surpluses $\pi_i(s) = u_i(s) - u_i(s^0) > 0$ are positive by construction (Section~\ref{sec:framework}), the Nash welfare is well-defined and serves as the social welfare to be maximised.

Since all surpluses are strictly positive, all payoff ratios are well-defined, and ratios turn out to be the \emph{only} invariant primitive available under CNC. Recall that each agent's payoff is determined only up to an individual positive scaling~$\alpha_i$. A statement of the form ``switching from~$s$ to~$s'$ improves agent~$i$'s payoff by~$\delta$'' has no intrinsic meaning, since the magnitude~$\delta$ changes with the choice of units. By contrast, the statement ``switching multiplies agent~$i$'s payoff by a factor of~$r$'' is invariant: if $\pi_i' = \alpha_i \pi_i$, the ratio $\pi_i(s)/\pi_i(s') = \pi_i'(s)/\pi_i'(s')$ is preserved. Thus, under CNC, deviations can only be evaluated in terms of proportional gains or losses.\footnote{Payoff ratios are the \emph{minimal} CNC-invariant primitive. Let $\psi(\pi_i(s), \pi_i(s'))$ be any real-valued function that is invariant under $\pi_i \mapsto \alpha_i \pi_i$, $\alpha_i > 0$. Setting $\alpha_i = 1/\pi_i(s')$ gives $\psi(\pi_i(s)/\pi_i(s'), 1)$, so $\psi$ depends only on the ratio.}

This observation allows us to rewrite the equilibrium conditions of Section~\ref{sec:price_of_anarchy} in ratio form. Since $\pi_i > 0$, the PNE condition~\eqref{eq:pne} is equivalent to
\begin{equation}\label{eq:ne_ratio}
\frac{\pi_i(s^{ne})}{\pi_i(s_i', s_{-i}^{ne})} \ge 1 \quad \text{for all } i \in \mathcal{N}, \; s_i' \in S_i,
\end{equation}
and the CCE condition~\eqref{eq:cce} is equivalent to
\begin{equation}\label{eq:mne_ratio}
\frac{\E_{s \sim \sigma}[\pi_i(s)]}{\E_{s \sim \sigma}[\pi_i(s_i', s_{-i})]} \ge 1 \quad \text{for all } i \in \mathcal{N}, \; s_i' \in S_i.
\end{equation}
Both formulations are CNC-invariant: the scaling factor $\alpha_i$ cancels in the ratio. More importantly, they reveal the \emph{multiplicative} structure of equilibrium: taking the product over agents with exponents~$p_i$, at the PNE we have
\begin{equation}\label{eq:ne_product}
W^{\mathrm{Nash}}(\pi(s^{ne})) \;\ge\; \prod_{i=1}^n \pi_i(s_i', s_{-i}^{ne})^{p_i} \quad \text{for all } s' \in \mathcal{S}.
\end{equation}
That is, the Nash welfare at equilibrium is bounded below by the weighted geometric mean of deviation payoffs. This product-of-deviations structure is precisely what we must control in order to bound the PoA, which calls for a smoothness condition built from products rather than sums.

\subsection{Multiplicative smoothness}\label{sec:ms}

With the ratio-based characterization~\eqref{eq:ne_product} in hand, we are ready to state the smoothness condition matched to the product structure of Nash welfare.

\begin{definition}[Multiplicative smoothness]\label{def:ms}
A game $G$ is $(\Lambda,\lambda)$-\emph{multiplicatively smooth} if $\Lambda \ge 1$, $\lambda \in (0,1]$, and for all $s, s' \in \mathcal{S}$,
\begin{equation}\label{eq:ms}
\prod_{i=1}^n \pi_i(s_i', s_{-i})^{p_i}
\ge
\frac{1}{\Lambda}\, W^{\mathrm{Nash}}(\pi(s'))^{\lambda}\, W^{\mathrm{Nash}}(\pi(s))^{1-\lambda}.
\end{equation}
\end{definition}

The left-hand side is the weighted geometric mean of the payoffs that agents would obtain if each unilaterally deviated from~$s$ to their action in~$s'$. The right-hand side interpolates between the welfare at $s'$ (weighted by~$\lambda$) and at $s$ (weighted by~$1{-}\lambda$), with $\Lambda$ measuring how far deviation payoffs can fall below this interpolation.

The exponents $\lambda$ and $1{-}\lambda$ sum to~$1$, so \eqref{eq:ms} is scale-invariant: under $\pi_i \mapsto \alpha_i \pi_i$, both sides scale by $\prod_i \alpha_i^{p_i}$. If they did not sum to~$1$, rescaling would break the inequality, violating CNC invariance.

\begin{theorem}[Pure NE bound]\label{thm:pure}
If the game $G$ is $(\Lambda,\lambda)$-multiplicatively smooth, then
\begin{equation}\label{eq:pure_bound}
\poa(G) \le \Lambda^{1/\lambda}.
\end{equation}
\end{theorem}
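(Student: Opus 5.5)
The plan is to mirror the classical smoothness argument, replacing sums by weighted geometric means. Fix a pure Nash equilibrium $s^{ne}\in\mathrm{PNE}(G)$ and a welfare-maximising profile $s^{\mathrm{opt}}\in\arg\max_{s\in S}W^{\mathrm{Nash}}(\pi(s))$. Because all surpluses are strictly positive, every Nash welfare value is strictly positive. Hence we may freely divide and take powers.

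The first step is to use the product form of the equilibrium condition, \eqref{eq:ne_product}, with the deviation profile $s'=s^{\mathrm{opt}}$. This gives
\[
W^{\mathrm{Nash}}(\pi(s^{ne}))\ \ge\ \prod_{i=1}^n \pi_i(s_i^{\mathrm{opt}},s^{ne}_{-i})^{p_i}.
\]
Note that \eqref{eq:ne_product} follows from raising each ratio inequality \eqref{eq:ne_ratio} to the power $p_i>0$ and multiplying over $i$.

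The second step applies the multiplicative smoothness inequality \eqref{eq:ms} with $s=s^{ne}$ and $s'=s^{\mathrm{opt}}$. It lower-bounds the right-hand side above, yielding
\[
W^{\mathrm{Nash}}(\pi(s^{ne}))\ \ge\ \frac{1}{\Lambda}\,W^{\mathrm{Nash}}(\pi(s^{\mathrm{opt}}))^{\lambda}\,W^{\mathrm{Nash}}(\pi(s^{ne}))^{1-\lambda}.
\]

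The third step is pure rearrangement. Divide both sides by $W^{\mathrm{Nash}}(\pi(s^{ne}))^{1-\lambda}>0$ to get
\[
W^{\mathrm{Nash}}(\pi(s^{ne}))^{\lambda}\ \ge\ \Lambda^{-1}\,W^{\mathrm{Nash}}(\pi(s^{\mathrm{opt}}))^{\lambda}.
\]
Then raise both sides to the power $1/\lambda$, which preserves the inequality since $\lambda\in(0,1]$ and both sides are positive. This gives
\[
\frac{W^{\mathrm{Nash}}(\pi(s^{\mathrm{opt}}))}{W^{\mathrm{Nash}}(\pi(s^{ne}))}\le\Lambda^{1/\lambda}.
\]
Since $s^{ne}$ was an arbitrary PNE, the bound also holds for the worst equilibrium, i.e.\ for the minimum in the denominator of \eqref{eq:poa}. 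This proves \eqref{eq:pure_bound}.

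There is no real obstacle in this argument. The only points needing care are the positivity assumptions, which justify dividing by $W^{\mathrm{Nash}}(\pi(s^{ne}))^{1-\lambda}$ and taking the $1/\lambda$-th root. The case $\lambda=1$ is consistent: the $W(s)$ factor disappears and the bound is simply $\Lambda$.
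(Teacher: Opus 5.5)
Your proof is correct and matches the paper's argument step for step. You instantiate \eqref{eq:ne_product} at $s'=s^{\mathrm{opt}}$, lower-bound the deviation product via \eqref{eq:ms} with $s=s^{ne}$ and $s'=s^{\mathrm{opt}}$, then divide by $W^{\mathrm{Nash}}(\pi(s^{ne}))^{1-\lambda}$ and take the $1/\lambda$-th power, with positivity of the surpluses justifying both operations just as the paper implicitly assumes.
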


\begin{proof}
Let $s^\star \in \argmax_{s} W^{\mathrm{Nash}}(\pi(s))$. At any PNE $s^{ne}$, the product form~\eqref{eq:ne_product} gives $W^{\mathrm{Nash}}(\pi(s^{ne})) \ge \prod_i \pi_i(s_i^\star, s_{-i}^{ne})^{p_i}$. Applying~\eqref{eq:ms} with $s = s^{ne}$, $s' = s^\star$:
\[
W^{\mathrm{Nash}}(\pi(s^{ne}))
\;\ge\;
\frac{1}{\Lambda}\, W^{\mathrm{Nash}}(\pi(s^\star))^{\lambda}\, W^{\mathrm{Nash}}(\pi(s^{ne}))^{1-\lambda}.
\]
Dividing by $W^{\mathrm{Nash}}(\pi(s^{ne}))^{1-\lambda}$ and raising to $1/\lambda$ yields the result.
\end{proof}

\emph{Extension to coarse correlated equilibria.}
Theorem~\ref{thm:pure} is stated for general $(\Lambda,\lambda)$, with $\lambda$ interpolating between welfare at $s'$ and at $s$. For the bounds derived in Section~\ref{sec:envelopes}, we use $\lambda = 1$, for which we also derive the CCE extension.
For a CCE~$\sigma \in \Delta(\mathcal{S})$, we evaluate welfare at the expected payoff vector $\bar\pi(\sigma) \;:=\; \bigl(\E_{s \sim \sigma}[\pi_1(s)],\; \dots,\; \E_{s \sim \sigma}[\pi_n(s)]\bigr)$, and define $\poa_{\mathrm{CCE}}(G) := \max_{s} W^{\mathrm{Nash}}(\pi(s))\, /\, \min_{\sigma \in \mathrm{CCE}(G)} W^{\mathrm{Nash}}(\bar\pi(\sigma))$. As Nash SWF is non-linear, the choice between $(W(E[\pi])) \text{ and } (E[W(\pi)])$ is substantive. We focus on evaluation of the lotteries by the ex ante Nash SWF, evaluated on the expected surplus vector. This choice keeps the social evaluation on the same expected utilities that define CCE and does not impose an additional ex post social aggregation over realized profiles.

\begin{theorem}[CCE bound]\label{thm:mne}
If the game is $(\Lambda,1)$-multiplicatively smooth, i.e.,
\begin{equation}\label{eq:ms_lambda1}
\prod_{i=1}^n \pi_i(s_i', s_{-i})^{p_i}
\ge
\frac{1}{\Lambda}\, W^{\mathrm{Nash}}(\pi(s')) \quad \text{for all } s, s' \in \mathcal{S},
\end{equation}
then $\poa_{\mathrm{CCE}}(G) \le \Lambda$.
\end{theorem}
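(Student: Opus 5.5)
The plan is to chain three inequalities: the CCE conditions in ratio form~\eqref{eq:mne_ratio}, a Jensen step that exploits concavity of the weighted geometric mean, and the pointwise $(\Lambda,1)$-smoothness inequality~\eqref{eq:ms_lambda1}. Fix $s^\star \in \argmax_s W^{\mathrm{Nash}}(\pi(s))$ and an arbitrary CCE $\sigma$. It suffices to show $W^{\mathrm{Nash}}(\bar\pi(\sigma)) \ge \frac{1}{\Lambda} W^{\mathrm{Nash}}(\pi(s^\star))$.

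First, I apply the CCE condition~\eqref{eq:cce} with the fixed deviation $s_i' = s_i^\star$ for each player. This gives
\[
\E_{s\sim\sigma}[\pi_i(s)] \ge \E_{s\sim\sigma}[\pi_i(s_i^\star, s_{-i})]
\]
for every $i$. Both sides are positive, and $t\mapsto t^{p_i}$ is increasing. Raising to the powers $p_i$ and multiplying over players therefore yields
\[
W^{\mathrm{Nash}}(\bar\pi(\sigma)) \ge \prod_i \bigl(\E_{s\sim\sigma}[\pi_i(s_i^\star,s_{-i})]\bigr)^{p_i}.
\]
This is the randomized analogue of~\eqref{eq:ne_product}.

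Second, I move the expectation outside the product. Because $p_i>0$ and $\sum_i p_i = 1$, the map $x \mapsto \prod_i x_i^{p_i}$ is concave on the positive orthant, being the weighted geometric mean. Jensen's inequality, applied to the random vector $\bigl(\pi_i(s_i^\star,s_{-i})\bigr)_i$ with $s\sim\sigma$, then gives
\[
\prod_i \bigl(\E[\pi_i(s_i^\star,s_{-i})]\bigr)^{p_i} \ge \E_{s\sim\sigma}\Bigl[\prod_i \pi_i(s_i^\star,s_{-i})^{p_i}\Bigr].
\]
Third, I apply~\eqref{eq:ms_lambda1} pointwise inside the expectation with $s'=s^\star$. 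Each realization is bounded below by $\frac{1}{\Lambda}W^{\mathrm{Nash}}(\pi(s^\star))$, a constant, so the expectation is as well. Chaining the three steps and taking the minimum over CCE gives $\poa_{\mathrm{CCE}}(G)\le\Lambda$.

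The main obstacle is the middle step, which is exactly where $\lambda=1$ matters. For general $\lambda$, the right-hand side of~\eqref{eq:ms} contains the factor $W^{\mathrm{Nash}}(\pi(s))^{1-\lambda}$ with $s$ random. After taking expectations, this would have to be compared with $W^{\mathrm{Nash}}(\bar\pi(\sigma))^{1-\lambda}$, but the concavity of $W^{\mathrm{Nash}}$ gives $\E[W^{\mathrm{Nash}}(\pi(s))] \le W^{\mathrm{Nash}}(\bar\pi(\sigma))$, which points the wrong way. With $\lambda=1$ this term disappears, and Jensen is needed only in the favorable direction. I will therefore verify carefully that concavity of the weighted geometric mean (standard, e.g.\ via the weighted AM--GM inequality) gives exactly the direction used above. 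I will also note a boundary issue: if $\sigma$ places mass on $s^0$, surpluses may vanish there, but concavity and the bounds extend to the closed orthant by continuity, so the argument is unaffected.
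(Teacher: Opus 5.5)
Your proposal is correct and follows the paper's proof step for step. The paper likewise applies the CCE condition with the deviation $s_i^\star$, then uses the weighted H\"older inequality $\prod_i \E[X_i]^{p_i} \ge \E\bigl[\prod_i X_i^{p_i}\bigr]$ (equivalent to your Jensen step via concavity of the weighted geometric mean), and finally applies the $(\Lambda,1)$-smoothness bound pointwise. Your added explanation of why the argument needs $\lambda = 1$ (concavity runs the wrong way for the $W^{\mathrm{Nash}}(\pi(s))^{1-\lambda}$ factor) is a correct observation that the paper does not spell out.
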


\begin{proof}
Let $s^\star \in \argmax_s W^{\mathrm{Nash}}(\pi(s))$. The CCE ratio condition~\eqref{eq:mne_ratio} gives $\E[\pi_i(s)] \ge \E[\pi_i(s_i^\star, s_{-i})]$ for all~$i$. Taking products with exponents~$p_i$:
\[
W^{\mathrm{Nash}}(\bar\pi(\sigma)) = \prod_i \E[\pi_i(s)]^{p_i} \ge \prod_i \E[\pi_i(s_i^\star, s_{-i})]^{p_i}.
\]
By the weighted H\"older inequality,
$\prod_i \E[\pi_i(s_i^\star, s_{-i})]^{p_i} \ge \E[\prod_i \pi_i(s_i^\star, s_{-i})^{p_i}]$.
By~\eqref{eq:ms_lambda1}, the integrand is at least $\Lambda^{-1} W^{\mathrm{Nash}}(\pi(s^\star))$ pointwise. Taking expectations gives $W^{\mathrm{Nash}}(\bar\pi(\sigma)) \ge \Lambda^{-1} W^{\mathrm{Nash}}(\pi(s^\star))$, hence $\poa_{\mathrm{CCE}}(G) \le \Lambda$.
\end{proof}

\section{Envelope Based Bounds via Retention Envelopes}\label{sec:envelopes}

To apply the smoothness framework of Section~\ref{sec:smoothness}, one must verify condition~\eqref{eq:ms} for specific game classes. The key challenge is bounding a product of payoff ratios over all possible deviations, which is an ostensibly high-dimensional problem. We reduce it to a one-dimensional problem by introducing \emph{retention envelopes}: a single function that captures the worst-case multiplicative retention of utility when the aggregate weight on a resource increases, regardless of the underlying utility function.

\subsection{Weighted single-choice welfare games}\label{sec:lb_model}

Consider a finite set of resources~$\mathcal{M}$ with positive, nonincreasing utility functions $f_j: \mathbb{R}_{>0} \to \mathbb{R}_{>0}$, and $n$~players with weights $w_i > 0$. Each player chooses one resource; a strategy profile $s = (s_1, \dots, s_n)$ with $s_i \in \mathcal{M}$ induces aggregate weight $k_j(s) = \sum_{i:\, s_i = j} w_i$. The payoff of player~$i$ is the utility on the chosen resource: $\pi_i(s) = f_{s_i}(k_{s_i}(s))$.
This is the utility-maximisation analogue of weighted load balancing, with latency functions replaced by utility functions. The model is an instance of single-choice distributed welfare games~\citep{marden2013distributed}---the utility-maximization analogue of congestion games, which are widely used in the context of resource allocation. We set $W_{\mathrm{tot}} = \sum_i w_i$, $p_i = w_i / W_{\mathrm{tot}}$, and evaluate profiles using the weighted Nash welfare $W^{\mathrm{Nash}}(\pi(s)) = \prod_i \pi_i(s)^{p_i}$ as in Section~\ref{sec:price_of_anarchy}.

\subsection{Retention envelope and geometric closure}\label{sec:envelope_def}

The key idea is to reduce the multi-dimensional smoothness
verification to a one-dimensional worst-case calculation. Let
$\mathcal{F}$ be a set of positive, nonincreasing functions.

\begin{definition}[Retention envelope]\label{def:envelope}
For $u \ge 0$, the retention envelope of $\mathcal{F}$ is
\[
\psi_{\mathcal{F}}(u)
\;:=\;
\inf_{f\in\mathcal{F}}\ \inf_{x>0}\ \frac{f((1+u)x)}{f(x)}.
\]
The function $\psi_{\mathcal{F}}(u)$ captures the worst-case multiplicative
retention of utility when the aggregate weight on a resource is scaled by
a factor $(1+u)$.
\end{definition}

Since every $f\in\mathcal{F}$ is nonincreasing, the ratio
$u \mapsto {f((1+u)x)}/{f(x)}$
is nonincreasing for each fixed $f\in\mathcal{F}$ and $x>0$. Hence
$\psi_{\mathcal{F}}$ is nonincreasing on $[0,\infty)$.

\begin{definition}[Geometric closure]\label{def:gc}
Let $\psi:[0,\infty)\to(0,\infty)$. The geometric closure of $\psi$ is
\begin{multline*}
\psi^{\mathrm{gc}}(x)
\;:=\;
\inf \Big\{
\prod_{m=1}^M \psi(u_m)^{\alpha_m}
\;:\;
M\in\mathbb{N},\ \alpha_m\ge 0,\\
\sum_{m=1}^M \alpha_m = 1,\
\sum_{m=1}^M \alpha_m u_m = x
\Big\}.
\end{multline*}
Thus, $\psi^{\mathrm{gc}}(x)$ is the smallest weighted geometric mean of
values of $\psi$ compatible with an average expansion factor equal to $x$.
\end{definition}

\begin{lemma}\label{lem:gc_monotone}
If $\psi$ is nonincreasing, then so is $\psi^{\mathrm{gc}}$.
\end{lemma}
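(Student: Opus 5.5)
Fix $0 \le x \le y$; the aim is $\psi^{\mathrm{gc}}(y) \le \psi^{\mathrm{gc}}(x)$. Since $\psi^{\mathrm{gc}}(x)$ is an infimum over feasible mixtures for $x$, it suffices to show the following: for every feasible tuple $(M,\{\alpha_m\},\{u_m\})$ at $x$, there is a feasible tuple at $y$ whose weighted geometric mean is no larger. Taking the infimum over tuples at $x$ then gives the claim.

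The construction is a uniform shift. Given $\alpha_m \ge 0$ with $\sum_m \alpha_m = 1$ and $\sum_m \alpha_m u_m = x$, set $u_m' := u_m + (y-x)$ and keep the same weights $\alpha_m$. Then $u_m' \ge u_m \ge 0$, so each $u_m'$ lies in the domain $[0,\infty)$. Moreover
\[
\textstyle\sum_m \alpha_m u_m' = x + (y-x)\sum_m\alpha_m = y,
\]
so the shifted tuple is feasible at $y$. Because $\psi$ is nonincreasing and positive, $\psi(u_m') \le \psi(u_m)$ for each $m$. Raising to the nonnegative powers $\alpha_m$ preserves the inequality, since $t\mapsto t^{\alpha}$ is nondecreasing on $(0,\infty)$ for $\alpha\ge 0$. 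Multiplying these inequalities gives
\[
\textstyle\prod_m \psi(u_m')^{\alpha_m} \le \prod_m \psi(u_m)^{\alpha_m}.
\]
Hence $\psi^{\mathrm{gc}}(y) \le \prod_m \psi(u_m)^{\alpha_m}$ for every feasible tuple at $x$, and taking the infimum yields $\psi^{\mathrm{gc}}(y)\le\psi^{\mathrm{gc}}(x)$.

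No step is a real obstacle. The only point needing care is that the shifted points stay in $[0,\infty)$ and that the weights still sum to one, and both are immediate. One might also note that $\psi^{\mathrm{gc}}$ is well defined: the feasible set is nonempty, since $M=1$, $u_1=x$ is admissible, and the values are nonnegative. An alternative would be to shift only a single $u_m$ with $\alpha_m>0$, but the uniform shift avoids any case split on the weights.
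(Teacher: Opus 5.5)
Your proposal is correct and follows the paper's proof: the uniform shift $u_m \mapsto u_m + (y-x)$ with unchanged weights, followed by monotonicity of $\psi$ and taking infima. Your explicit checks that the shifted points stay in $[0,\infty)$ and that $t\mapsto t^{\alpha}$ is nondecreasing for $\alpha\ge 0$ are small details the paper leaves implicit.
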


\begin{proof}
Fix $0\le x\le y$. Let
$x=\sum_{m=1}^M \alpha_m u_m$
be any representation with $\alpha_m\ge 0$, $\sum_m \alpha_m=1$.
Define $v_m := u_m + (y-x)$ for $m=1,\dots,M$.
Then $\sum_{m=1}^M \alpha_m v_m = y$.
Since $\psi$ is nonincreasing and $v_m\ge u_m$ for all $m$,
\[
\prod_{m=1}^M \psi(v_m)^{\alpha_m}
\;\le\;
\prod_{m=1}^M \psi(u_m)^{\alpha_m}.
\]
Taking the infimum over all representations of $y$ on the left, and then
over all representations of $x$ on the right, yields
$\psi^{\mathrm{gc}}(y)\le \psi^{\mathrm{gc}}(x)$.
\end{proof}

\begin{figure}[t]
\centering
\begin{tikzpicture}[
    >=Latex,
    resource/.style={draw, rounded corners=2pt,
        minimum width=2.1cm, minimum height=0.72cm, thick, inner sep=0pt},
    playerbox/.style={draw, rounded corners=1pt,
        minimum width=0.32cm, minimum height=0.28cm,
        thick, fill=gray!10, inner sep=0pt},
]

\coordinate (c1) at (0,0);
\coordinate (c2) at (2.7,0);
\coordinate (c3) at (5.4,0);

\def\toprow{1.20}
\def\botrow{-0.85}
\def\titlerow{1.95}

\node[font=\footnotesize] at ($(c1)+(0,\titlerow)$)  {resource $1$};
\node[font=\footnotesize] at ($(c2)+(0,\titlerow)$)  {resource $j$};
\node[font=\footnotesize] at ($(c3)+(0,\titlerow)$)  {resource $m$};

\node[font=\footnotesize, anchor=east] at (-1.15,\toprow) {profile $s$};
\node[font=\footnotesize, anchor=east] at (-1.15,\botrow) {profile $s'$};

\node[resource] (r1t) at ($(c1)+(0,\toprow)$) {};
\node[resource] (rjt) at ($(c2)+(0,\toprow)$) {};
\node[resource] (rmt) at ($(c3)+(0,\toprow)$) {};

\node[resource] (r1b) at ($(c1)+(0,\botrow)$) {};
\node[resource] (rjb) at ($(c2)+(0,\botrow)$) {};
\node[resource] (rmb) at ($(c3)+(0,\botrow)$) {};

\fill[gray!25, rounded corners=1pt]
    ($(r1t.south west)+(0.08,0.08)$) rectangle ($(r1t.south west)+(0.62,0.64)$);
\node[font=\footnotesize] at ($(r1t.south west)+(0.35,0.36)$) {$k_1$};

\fill[gray!40, rounded corners=1pt]
    ($(rjt.south west)+(0.08,0.08)$) rectangle ($(rjt.south west)+(1.15,0.64)$);
\node[font=\footnotesize] at ($(rjt.south west)+(0.60,0.36)$) {$k_j$};

\fill[gray!25, rounded corners=1pt]
    ($(rmt.south west)+(0.08,0.08)$) rectangle ($(rmt.south west)+(0.82,0.64)$);
\node[font=\footnotesize] at ($(rmt.south west)+(0.45,0.36)$) {$k_m$};

\node[playerbox] at ($(r1b.west)+(0.30,0)$) {};
\node[playerbox] at ($(r1b.west)+(0.70,0)$) {};
\node[font=\footnotesize, anchor=west] at ($(r1b.west)+(1.00,0)$) {$o_1$};

\node[playerbox] at ($(rjb.west)+(0.25,0)$) {};
\node[playerbox] at ($(rjb.west)+(0.62,0)$) {};
\node[playerbox] at ($(rjb.west)+(0.99,0)$) {};
\node[playerbox] at ($(rjb.west)+(1.36,0)$) {};
\node[font=\footnotesize, anchor=west] at ($(rjb.west)+(1.60,0)$) {$o_j$};

\node[font=\scriptsize] at ($(rjb.west)+(0.25,-0.52)$) {$w_{i_1}$};
\node[font=\scriptsize] at ($(rjb.west)+(0.62,-0.52)$) {$w_{i_2}$};
\node[font=\scriptsize] at ($(rjb.west)+(0.99,-0.52)$) {$\cdots$};
\node[font=\scriptsize] at ($(rjb.west)+(1.36,-0.52)$) {$w_{i_r}$};

\node[playerbox] at ($(rmb.west)+(0.30,0)$) {};
\node[playerbox] at ($(rmb.west)+(0.70,0)$) {};
\node[playerbox] at ($(rmb.west)+(1.10,0)$) {};
\node[font=\footnotesize, anchor=west] at ($(rmb.west)+(1.38,0)$) {$o_m$};

\draw[->, thick, gray!70!black]
    ($(rjb.north)+(0,0.08)$) -- ($(rjt.south)+(0,-0.08)$);

\node[font=\footnotesize] at ($(c1)!0.5!(c2)+(0,\toprow)$) {$\cdots$};
\node[font=\footnotesize] at ($(c2)!0.5!(c3)+(0,\toprow)$) {$\cdots$};
\node[font=\footnotesize] at ($(c1)!0.5!(c2)+(0,\botrow)$) {$\cdots$};
\node[font=\footnotesize] at ($(c2)!0.5!(c3)+(0,\botrow)$) {$\cdots$};

\node[anchor=north, align=center, font=\footnotesize]
    at ($(c2)+(0,-1.75)$)
    {$f_j(k_j{+}w_i) \;\ge\; f_j(k_j{+}o_j)$
     \;\;since $o_j \ge w_i$, $f_j{\downarrow}$};

\end{tikzpicture}
\caption{Grouping deviation terms by target resource.
Each player~$i$ with $s_i'{=}j$ adds weight $w_i \le o_j$ to resource~$j$.
Since $f_j$ is nonincreasing, the deviation payoff is bounded below by
$f_j(k_j{+}o_j)$, converting the player-wise product
into~\eqref{eq:dev_bound}.}
\label{fig:resource-grouping}
\end{figure}

\subsection{Envelope bound}\label{sec:envelope_thm}

\begin{theorem}[Envelope bound for weighted single-choice welfare games]\label{thm:envLB}
Assume each $f_j \in \mathcal{F}$ is nonincreasing. Let
$\Lambda_{\mathcal{F}} := {1}/{\psi_{\mathcal{F}}^{\mathrm{gc}}(1)}$.
Then for all profiles $s, s'$,
\begin{equation}\label{eq:smooth_lb}
\prod_{i=1}^n \pi_i(s_i', s_{-i})^{w_i}
\;\ge\;
\Lambda_{\mathcal{F}}^{-W_{\mathrm{tot}}}
\prod_{i=1}^n \pi_i(s')^{w_i},
\end{equation}
and every Nash equilibrium satisfies $\poa(G) \le \Lambda_{\mathcal{F}}$.
\end{theorem}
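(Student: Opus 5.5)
Fix profiles $s,s'$. Write $k_j=k_j(s)$ for the load on resource $j$ under $s$ and $o_j=k_j(s')$ for the load under $s'$. For each resource $j$ with $o_j>0$, consider the players $i$ with $s'_i=j$. After deviating unilaterally from $s$ to $j$, player $i$ faces load at most $k_j+w_i\le k_j+o_j$. This holds whether or not $s_i=j$: if $s_i=j$ the load is exactly $k_j\le k_j+o_j$. Since $f_j$ is nonincreasing, $\pi_i(s'_i,s_{-i})\ge f_j(k_j+o_j)$. Grouping the players by target resource, as in Figure~\ref{fig:resource-grouping}, gives
\[
\prod_i \pi_i(s'_i,s_{-i})^{w_i}\;\ge\;\prod_{j:\,o_j>0} f_j(k_j+o_j)^{o_j}.
\]
On the right-hand side of \eqref{eq:smooth_lb}, $\prod_i \pi_i(s')^{w_i}=\prod_j f_j(o_j)^{o_j}$. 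It therefore suffices to show
\[
\prod_j\Big(\frac{f_j(k_j+o_j)}{f_j(o_j)}\Big)^{o_j}\;\ge\;\psi^{\mathrm{gc}}_{\mathcal F}(1)^{W_{\mathrm{tot}}}.
\]

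\textbf{Per-resource bound.} Apply the retention envelope with $x=o_j$ and $u_j:=k_j/o_j$. Definition~\ref{def:envelope} gives $f_j(k_j+o_j)/f_j(o_j)=f_j((1+u_j)o_j)/f_j(o_j)\ge\psi_{\mathcal F}(u_j)$. Here $\mathcal F$ contains $f_j$; in fact any family containing the $f_j$ works.

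\textbf{Aggregation.} Set $\alpha_j:=o_j/W_{\mathrm{tot}}$, so that $\sum_j\alpha_j=1$. Then
\[
\sum_j\alpha_j u_j=\frac{\sum_{j:\,o_j>0}k_j}{W_{\mathrm{tot}}}=:x\le 1,
\]
because the total load $\sum_j k_j$ is $W_{\mathrm{tot}}$ and we sum only over resources with $o_j>0$. Definition~\ref{def:gc} then yields
\[
\prod_j\psi_{\mathcal F}(u_j)^{\alpha_j}\ge\psi^{\mathrm{gc}}_{\mathcal F}(x).
\]
Lemma~\ref{lem:gc_monotone}, applicable since $\psi_{\mathcal F}$ is nonincreasing, gives $\psi^{\mathrm{gc}}_{\mathcal F}(x)\ge\psi^{\mathrm{gc}}_{\mathcal F}(1)$. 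Raising to the power $W_{\mathrm{tot}}$ gives \eqref{eq:smooth_lb}.

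\textbf{The main subtlety.} This is the step I expect to need the most care: the mismatch between the average expansion $x$ and $1$, which is exactly where Lemma~\ref{lem:gc_monotone} is needed. Resources with $o_j=0$ must be dropped consistently on both sides, and the inequality $x\le1$ is what rescues the argument. A secondary point is to check that $\psi_{\mathcal F}$ is positive, or else to accept the bound vacuously if $\psi^{\mathrm{gc}}_{\mathcal F}(1)=0$.

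\textbf{PoA conclusion.} Take the $1/W_{\mathrm{tot}}$-th root of \eqref{eq:smooth_lb}. With $p_i=w_i/W_{\mathrm{tot}}$ this is precisely $(\Lambda_{\mathcal F},1)$-multiplicative smoothness \eqref{eq:ms_lambda1}. Note that $\Lambda_{\mathcal F}\ge1$, since $\psi\le1$ and hence $\psi^{\mathrm{gc}}\le1$. Theorem~\ref{thm:pure} with $\lambda=1$ then gives $\poa(G)\le\Lambda_{\mathcal F}$, and Theorem~\ref{thm:mne} extends the bound to CCE.
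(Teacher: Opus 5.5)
Your proof is correct and follows the paper's argument step for step: the per-player bound by $f_j(k_j+o_j)$, grouping by target resource, the substitution $u_j=k_j/o_j$ with weights $o_j/W_{\mathrm{tot}}$, the geometric-closure bound at average expansion at most $1$, and Lemma~\ref{lem:gc_monotone} to pass to $\psi^{\mathrm{gc}}_{\mathcal F}(1)$. The differences are minor refinements: you handle the case $s_i=j$ explicitly, where the paper's ``aggregate weight $k_j+w_i$'' is slightly imprecise but harmless, and you conclude via Theorem~\ref{thm:pure} with $\lambda=1$ after checking $\Lambda_{\mathcal F}\ge 1$, whereas the paper repeats that Nash-equilibrium argument inline.
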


\begin{proof}
Fix profiles $s, s'$. Write $k_j := k_j(s)$ and $o_j := k_j(s')$ for the aggregate weights under $s$ and $s'$, respectively.

Consider any resource~$j$ and a player~$i$ with $s_i' = j$. In the unilateral deviation $(s_i', s_{-i})$, player~$i$ adds weight~$w_i$ to resource~$j$, giving aggregate weight $k_j + w_i$. Since $o_j = \sum_{i':\, s_{i'}' = j} w_{i'} \ge w_i$ and $f_j$ is nonincreasing,
\[
\pi_i(s_i', s_{-i}) = f_j(k_j + w_i) \;\ge\; f_j(k_j + o_j).
\]
Raise to the power $w_i$ and multiply over all players, grouping by the target resource in~$s'$:
\begin{equation}\label{eq:dev_bound}
\prod_{i=1}^n \pi_i(s_i', s_{-i})^{w_i} \;\ge\; \prod_{j \in \mathcal{M}} f_j(k_j + o_j)^{o_j}.
\end{equation}

Since every player assigned to resource $j$ in $s'$ receives payoff $f_j(o_j)$,
we also have
$\prod_{i=1}^n \pi_i(s')^{w_i} = \prod_{j \in \mathcal{M}} f_j(o_j)^{o_j}$.
Therefore,
\[
\frac{\prod_{i=1}^n \pi_i(s_i', s_{-i})^{w_i}}
     {\prod_{i=1}^n \pi_i(s')^{w_i}}
\;\ge\;
\prod_{j:\,o_j>0}
\left(
\frac{f_j(k_j+o_j)}{f_j(o_j)}
\right)^{o_j}.
\]
For each $j$ with $o_j>0$, define
$u_j := {k_j}/{o_j}$ and $\theta_j := {o_j}/{W_{\mathrm{tot}}}$.
Then $k_j+o_j=(1+u_j)o_j$ and $\sum_{j:\,o_j>0}\theta_j = 1$.
Hence, after taking the $1/W_{\mathrm{tot}}$ power,
\begin{equation}\label{eq:ratio_prod}
\left(
\frac{\prod_{i=1}^n \pi_i(s_i', s_{-i})^{w_i}}
     {\prod_{i=1}^n \pi_i(s')^{w_i}}
\right)^{\!1/W_{\mathrm{tot}}}
\;\ge\;
\prod_{j:\,o_j>0}
\left(
\frac{f_j((1+u_j)o_j)}{f_j(o_j)}
\right)^{\!\theta_j}\!.
\end{equation}

By definition of the retention envelope,
$f_j((1+u_j)o_j)/f_j(o_j) \ge \psi_{\mathcal{F}}(u_j)$
for all $j$ with $o_j>0$.
Substituting into~\eqref{eq:ratio_prod} gives
\begin{equation}\label{eq:psi_prod}
\left(
\frac{\prod_{i=1}^n \pi_i(s_i', s_{-i})^{w_i}}
     {\prod_{i=1}^n \pi_i(s')^{w_i}}
\right)^{\!1/W_{\mathrm{tot}}}
\;\ge\;
\prod_{j:\,o_j>0}\psi_{\mathcal{F}}(u_j)^{\theta_j}.
\end{equation}

Now define
$\bar u := \sum_{j:\,o_j>0}\theta_j u_j
= {\sum_{j:\,o_j>0}k_j}/{W_{\mathrm{tot}}} \le 1$,
where the last inequality uses $\sum_{j\in\mathcal{M}} k_j = W_{\mathrm{tot}}$.
Since the coefficients $(\theta_j)_{j:\,o_j>0}$ form a convex combination
with average $\bar u$, the definition of the geometric closure yields
\[
\prod_{j:\,o_j>0}\psi_{\mathcal{F}}(u_j)^{\theta_j}
\;\ge\;
\psi_{\mathcal{F}}^{\mathrm{gc}}(\bar u).
\]
Since $\psi_{\mathcal{F}}$ is nonincreasing, Lemma~\ref{lem:gc_monotone} implies that
$\psi_{\mathcal{F}}^{\mathrm{gc}}$ is nonincreasing as well, and thus
$\psi_{\mathcal{F}}^{\mathrm{gc}}(\bar u) \ge \psi_{\mathcal{F}}^{\mathrm{gc}}(1)$.
Combining with~\eqref{eq:psi_prod}, we obtain
\[
\left(
\frac{\prod_{i=1}^n \pi_i(s_i', s_{-i})^{w_i}}
     {\prod_{i=1}^n \pi_i(s')^{w_i}}
\right)^{\!1/W_{\mathrm{tot}}}
\;\ge\;
\psi_{\mathcal{F}}^{\mathrm{gc}}(1)
=
\Lambda_{\mathcal{F}}^{-1}.
\]
Raising both sides to the power $W_{\mathrm{tot}}$ yields~\eqref{eq:smooth_lb}.

Let $s^\star \in \argmax_s W(\pi(s))$. At any NE~$s^{\mathrm{ne}}$, $\pi_i(s^{\mathrm{ne}}) \ge \pi_i(s_i^\star, s_{-i}^{\mathrm{ne}})$ for all~$i$. Raising to $w_i$, multiplying, and applying~\eqref{eq:smooth_lb} with $s = s^{\mathrm{ne}}$, $s' = s^\star$:
\[
\prod_i \pi_i(s^{\mathrm{ne}})^{w_i}
\;\ge\;
\prod_i \pi_i(s_i^\star, s_{-i}^{\mathrm{ne}})^{w_i}
\;\ge\;
\Lambda_{\mathcal{F}}^{-W_{\mathrm{tot}}} \prod_i \pi_i(s^\star)^{w_i}.
\]
Taking the $1/W_{\mathrm{tot}}$-th power gives $W(\pi(s^{\mathrm{ne}})) \ge \Lambda_{\mathcal{F}}^{-1}\, W(\pi(s^\star))$.
\end{proof}

By Theorem~\ref{thm:mne}, the same $(\Lambda_{\mathcal{F}},1)$-smoothness bound also extends to coarse correlated equilibria.

\subsection{Power law utilities}\label{sec:poly}

We now specialise to power law utility curves, obtaining a closed-form envelope bound as a corollary.

\begin{definition}[Power law family]\label{def:poly}
Let $\mathcal{P}^{\downarrow}(p)$ denote the family of functions $f(x) = \sum_{d=0}^p \alpha_d\, x^{-d}$ with $\alpha_d \ge 0$.
\end{definition}

\begin{lemma}[Power law scaling]\label{lem:poly}
If $f \in \mathcal{P}^{\downarrow}(p)$ and $t \ge 1$, then $f(t\, x) \ge t^{-p}\, f(x)$ for all $x > 0$.
\end{lemma}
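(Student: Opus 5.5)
The plan is a termwise argument: the inequality is linear in the coefficients $\alpha_d \ge 0$, so it suffices to check each monomial $x^{-d}$ separately and then sum.

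Fix $t \ge 1$ and $x > 0$. For each $d \in \{0,\dots,p\}$, the monomial satisfies
\[
(tx)^{-d} = t^{-d}x^{-d}.
\]
Since $t \ge 1$ and $d \le p$, we have $t^{-d} \ge t^{-p}$. Multiplying by $x^{-d} > 0$ gives $(tx)^{-d} \ge t^{-p} x^{-d}$.

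Now multiply each of these inequalities by $\alpha_d \ge 0$ and sum over $d$. This yields
\[
f(tx) = \sum_{d=0}^p \alpha_d (tx)^{-d} \ge t^{-p}\sum_{d=0}^p \alpha_d x^{-d} = t^{-p} f(x),
\]
which is the claim.

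There is no real obstacle. The only points to watch are nonnegativity of the $\alpha_d$, so that summing preserves the inequality, and $t \ge 1$, so that $t^{-d}$ is nonincreasing in $d$. Both are hypotheses of the lemma.

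For the subsequent use, the lemma gives $\psi_{\mathcal{P}^{\downarrow}(p)}(u) \ge (1+u)^{-p}$. Since $u \mapsto -p\log(1+u)$ is convex, Jensen's inequality then shows that the geometric closure at $1$ is at least $2^{-p}$. That step is not part of this lemma, however.
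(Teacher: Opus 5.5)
Your proposal is correct and matches the paper's proof: both use $t^{-d} \ge t^{-p}$ for $d \le p$ and $t \ge 1$, multiply by $\alpha_d x^{-d} \ge 0$, and sum over $d$. Your closing Jensen remark is likewise equivalent to the weighted AM--GM step the paper uses in the corollary.
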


\begin{proof}
$f(t\, x) = \sum_{d=0}^p \alpha_d\, t^{-d}\, x^{-d} \ge \sum_{d=0}^p \alpha_d\, t^{-p}\, x^{-d} = t^{-p}\, f(x)$, since $t^{-d} \ge t^{-p}$ for $d \le p$ when $t \ge 1$.
\end{proof}

\begin{corollary}\label{cor:poly}
For $\mathcal{F} = \mathcal{P}^{\downarrow}(p)$, $\Lambda_{\mathcal{F}} = 2^p$. Consequently, every Nash equilibrium and every coarse correlated equilibrium of a weighted single-choice welfare game with power law utilities of degree at most~$p$ satisfies $\poa(G) \le 2^p$.
\end{corollary}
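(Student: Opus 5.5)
The plan is to compute $\psi_{\mathcal{F}}$ for $\mathcal{F}=\mathcal{P}^{\downarrow}(p)$ exactly, or at least to squeeze it between matching bounds. I will then show that taking the geometric closure does not lower its value at $x=1$. Finally I will invoke Theorem~\ref{thm:envLB} and Theorem~\ref{thm:mne}. Every element of $\mathcal{P}^{\downarrow}(p)$ (with not all $\alpha_d$ zero) is positive and nonincreasing, so the standing hypotheses hold.

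\emph{Lower bound on the envelope.} Apply Lemma~\ref{lem:poly} with $t=1+u\ge 1$. This gives $f((1+u)x)/f(x)\ge (1+u)^{-p}$ for every $f\in\mathcal{P}^{\downarrow}(p)$ and every $x>0$, hence $\psi_{\mathcal{F}}(u)\ge (1+u)^{-p}$. For any admissible representation $x=\sum_m\alpha_m u_m$ in Definition~\ref{def:gc}, we then have
\[
\prod_{m}\psi_{\mathcal{F}}(u_m)^{\alpha_m}\ \ge\ \exp\Bigl(-p\sum_m \alpha_m\log(1+u_m)\Bigr)\ \ge\ (1+x)^{-p}.
\]
The second inequality is Jensen's inequality for the concave function $\log(1+\cdot)$. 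Taking the infimum over representations gives $\psi_{\mathcal{F}}^{\mathrm{gc}}(x)\ge (1+x)^{-p}$, so $\psi_{\mathcal{F}}^{\mathrm{gc}}(1)\ge 2^{-p}$ and therefore $\Lambda_{\mathcal{F}}\le 2^p$.

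\emph{Matching upper bound.} The function $f(x)=x^{-p}$ belongs to $\mathcal{P}^{\downarrow}(p)$ and satisfies $f((1+u)x)/f(x)=(1+u)^{-p}$ for all $x$. Hence $\psi_{\mathcal{F}}(u)\le (1+u)^{-p}$, and combined with the lower bound this gives equality. Choosing the trivial representation $M=1$, $u_1=1$ in Definition~\ref{def:gc} yields $\psi_{\mathcal{F}}^{\mathrm{gc}}(1)\le\psi_{\mathcal{F}}(1)=2^{-p}$. Together with the previous step, $\Lambda_{\mathcal{F}}=2^p$ exactly.

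\emph{Consequences.} Theorem~\ref{thm:envLB} gives $\poa(G)\le 2^p$ for pure Nash equilibria. Its inequality~\eqref{eq:smooth_lb} is obtained with $p_i=w_i/W_{\mathrm{tot}}$ after taking the $1/W_{\mathrm{tot}}$ power. In that form it is precisely $(2^p,1)$-multiplicative smoothness~\eqref{eq:ms_lambda1}, so Theorem~\ref{thm:mne} gives $\poa_{\mathrm{CCE}}(G)\le 2^p$. The only step with real content is the Jensen step, which shows that mixing expansion factors cannot beat the single point $u=1$. Since $\log(1+u)$ is concave, that step is immediate, and I expect no serious obstacle.
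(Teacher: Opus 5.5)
Your proof is correct and follows the paper's argument essentially step for step. Lemma~\ref{lem:poly} and the extremal function $x^{-p}$ pin down $\psi_{\mathcal{F}}(u)=(1+u)^{-p}$; your Jensen step for $\log(1+\cdot)$ is the paper's weighted AM--GM, and with the single-point representation it gives $\psi_{\mathcal{F}}^{\mathrm{gc}}(1)=2^{-p}$; Theorems~\ref{thm:envLB} and~\ref{thm:mne} then finish, and your added remarks (excluding the zero function, and noting that \eqref{eq:smooth_lb} raised to the power $1/W_{\mathrm{tot}}$ is exactly \eqref{eq:ms_lambda1}) make explicit points the paper leaves implicit.
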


\begin{proof}
Lemma~\ref{lem:poly} gives the lower bound $\psi_{\mathcal{F}}(u) \ge (1+u)^{-p}$.
Equality is attained by the function $f(x) = x^{-p}$, for which
$f((1+u)x)/f(x) = (1+u)^{-p}$ for all $x>0$.
Hence $\psi_{\mathcal{F}}(u) = (1+u)^{-p}$.

Now let $x=1$. For any representation
$1=\sum_{m=1}^M \alpha_m u_m$ with $\alpha_m\ge 0$ and $\sum_{m=1}^M \alpha_m=1$,
\[
\prod_{m=1}^M \psi_{\mathcal{F}}(u_m)^{\alpha_m}
=
\prod_{m=1}^M (1+u_m)^{-p\alpha_m}
=
\left(\prod_{m=1}^M (1+u_m)^{\alpha_m}\right)^{\!-p}.
\]
By the weighted AM--GM inequality,
\[
\prod_{m=1}^M (1+u_m)^{\alpha_m}
\le
\sum_{m=1}^M \alpha_m (1+u_m)
=
1+\sum_{m=1}^M \alpha_m u_m
=
2.
\]
Thus $\prod_{m=1}^M \psi_{\mathcal{F}}(u_m)^{\alpha_m} \ge 2^{-p}$.
Taking the infimum over all such representations yields
$\psi_{\mathcal{F}}^{\mathrm{gc}}(1)\ge 2^{-p}$.
On the other hand, equality is attained by the single-point representation
$M=1$, $\alpha_1=1$, $u_1=1$. Hence
$\psi_{\mathcal{F}}^{\mathrm{gc}}(1)=2^{-p}$
and therefore $\Lambda_{\mathcal{F}} = 1/\psi_{\mathcal{F}}^{\mathrm{gc}}(1) = 2^p$.
The claim now follows from Theorem~\ref{thm:envLB} and Theorem~\ref{thm:mne}.
\end{proof}

\begin{remark}
The bound above can be viewed in relation to the Nash social welfare analysis of selfish load balancing by Bil\`o et al.~\cite{bilo2022nash}, who obtain polynomial bounds in the cost minimisation setting. Our contribution uses a separate smoothness based derivation for the utility maximisation analogue. The same envelope calculation can also be used to derive the corresponding pure Nash bound for load balancing to obtain the same bound $2^p$. However, the utility formulation has an additional advantage in the present framework: once the envelope bound is established, Theorem~\ref{thm:mne} yields the same guarantee for coarse correlated equilibria as well.
\end{remark}

\section{Conclusion}\label{sec:conclusion}

We addressed the representation-dependence of the standard PoA under classical vNM utility theory, where payoffs are unique only up to agent-specific positive affine transformations. Drawing on the welfarist approach from social choice theory, the weighted Nash welfare is the canonical aggregator compatible with Cardinal Non-Comparability (CNC), and we developed \emph{multiplicative smoothness} as the matched bounding methodology for PoA in this quite general case. The resulting framework yields general CNC-invariant PoA bounds for pure Nash with $(\Lambda,\lambda)$ and coarse correlated equilibria bound with $(\Lambda,1)$. Since surpluses absorb offsets, the CNC invariance class acts on surpluses as individual positive scalings, under which the multiplicative smoothness condition~\eqref{eq:ms} is invariant by construction. The retention envelope and geometric closure technique then provides a computationally tractable method for deriving the smoothness constant~$\Lambda$ across arbitrary utility families. For weighted single-choice welfare games with power law utilities of degree at most~$p$, the bound~$\Lambda_{\mathcal{F}} = 2^p$ is obtained via a simple envelope calculation.

More broadly, we argue that game theory is intrinsically robust to how individual agents calibrate their payoffs: no assumption on interpersonal comparability is needed to predict strategic behavior. Any comparability assumption layered on top is not demanded by the strategic model itself but introduced by the designer's choice of welfare criterion. The strength of the resulting efficiency guarantee is therefore bounded by the strength of that assumption. The framework developed here shows that meaningful PoA bounds and corresponding methods can be obtained at the most general level of comparability, preserving the full robustness that equilibrium theory provides.

Several extensions are natural. First, the framework can be extended to broader classes of games, including general distributed welfare games~\citep{marden2013distributed} and cost-minimisation settings, by aligning the Nash welfare benchmark with cost-based payoff structures. Second, the relationship between multiplicative smoothness and the generalised smoothness methodology of~\citet{chandan2024methodologies} deserves further study. In networked systems, developing CNC-invariant mechanism and toll design, where tolls are designed to optimise Nash welfare rather than the utilitarian sum, is a promising direction for more robust allocation policies.
Generally, both the structures of the game and of utilities deserve further inspections and the learning might be that known PoA results relying on cardinal full-comparability may often not apply for the purpose of many applications.

\printbibliography

\end{document}